\title{Coloring Mixed and Directional Interval Graphs%
	\thanks{Work on this problem was initiated at the HOMONOLO Workshop 2021
	in Nov\'a Louka, Czech Republic. G.G.\ is partially
    supported by the National Science Center of Poland under grant no.\
    2019/35/B/ST6/02472.}}
\author{Grzegorz~Gutowski\inst{1}\orcidID{0000-0003-3313-1237} \and
  Florian~Mittelst\"adt\inst{2} \and
  Ignaz~Rutter\inst{3}\orcidID{0000-0002-3794-4406} \and
  Joachim~Spoerhase\inst{4}\orcidID{0000-0002-2601-6452} \and
  Alexander~Wolff\inst{2}\orcidID{0000-0001-5872-718X} \and
  Johannes~Zink\inst{2}\orcidID{0000-0002-7398-718X}}
\institute{%
  Institute of Theoretical Computer Science,
  Faculty of Mathematics and Computer Science,
  Jagiellonian University, Krak{\'o}w, Poland \and Universität
  Würzburg, Würzburg, Germany \and Universität Passau, Passau, Germany
  \and Max-Planck-Institut Saarbrücken, Saarbrücken, Germany}
\authorrunning{G.~Gutowski et al.}
\let\leq\leqslant
\let\geq\geqslant
\let\le\leqslant
\let\ge\geqslant
\let\rho\varrho
\newcommand{\brac}[1]{{\left(#1\right)}}
\newcommand{\sbrac}[1]{{\left[#1\right]}}
\newcommand{\set}[1]{\left\{#1\right\}}
\newcommand{\norm}[1]{{\left|#1\right|}}
\newcommand{\Oh}[1]{O\brac{#1}}
\newcommand{\NP}{\ensuremath{\mathsf{NP}}\xspace}
\newcommand{\PQ}{PQ}
\newcommand{\PQP}{P}
\newcommand{\PQQ}{Q}
\newcommand{\MPQ}{MPQ}
\newcommand{\threesat}{\mbox{\textsc{3-SAT}}\xspace}
\renewcommand{\orcidID}[1]{\href{https://orcid.org/#1}{\includegraphics[scale=.03]{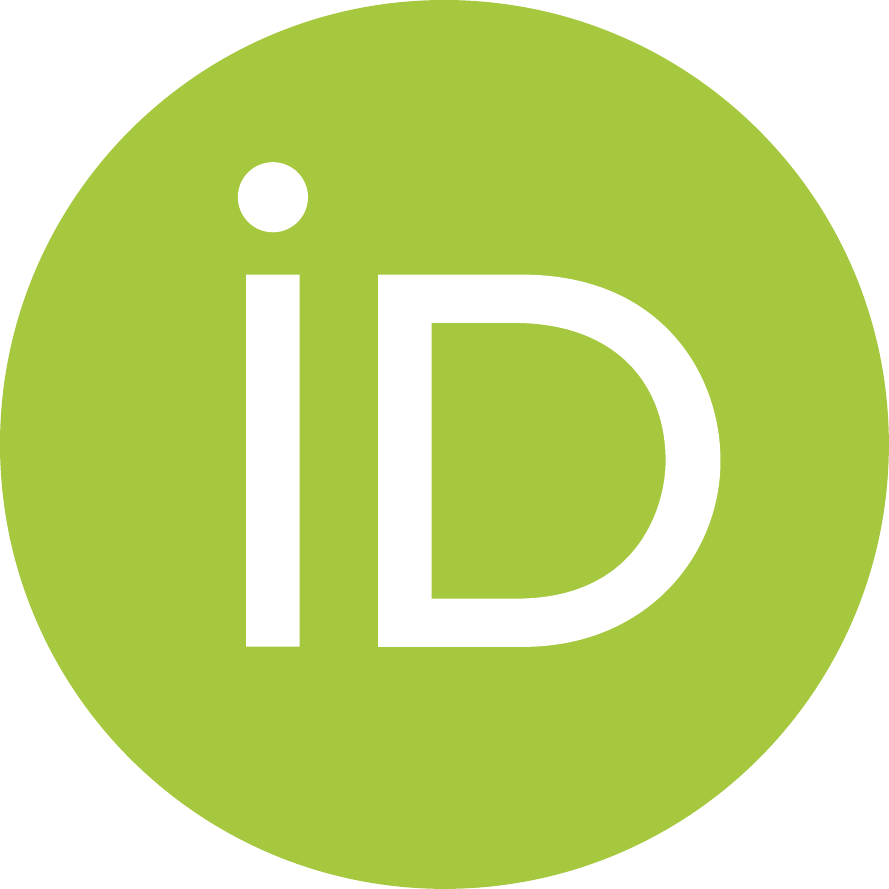}}}
\newcommand{\T}{\ensuremath{\mathcal{T}}\xspace}
\newcommand{\List}{\ensuremath{\mathcal{L}}\xspace}
\newcommand{\ie}{i.e.}
\newcommand{\eg}{e.g.}
\newcommand{\edge}[2]{\ensuremath{\{#1,#2\}}}
\newcommand{\arc}[2]{\ensuremath{(#1,#2)}}
\DeclareMathOperator{\leftc}{lc}
\DeclareMathOperator{\rightc}{rc}
\crefname{figure}{Fig.}{Figs.}
\Crefname{figure}{Figure}{Figures}
\crefname{section}{Sect.}{Sects.}
\Crefname{section}{Section}{Sections}
\crefname{appendix}{App.}{Apps.}
\Crefname{appendix}{Appendix}{Appendices}
\crefname{algocf}{Algorithm}{Algorithms}
\crefname{theorem}{Theorem}{Theorems}
\crefname{lemma}{Lemma}{Lemmas}
\crefname{conjecture}{Conjecture}{Conjectures}
\definecolor{defblue}{rgb}{0.121,0.47,0.705}
\let\emph\relax
\DeclareTextFontCommand{\emph}{\color{defblue}\em}
\begin{document}

\maketitle

\begin{abstract}
  A \emph{mixed graph} has a set of vertices, a set of undirected
  egdes, and a set of directed arcs.  A \emph{proper coloring} of a
  mixed graph~$G$ is a function $c$ that assigns to each vertex in~$G$
    a positive integer such that, for each edge \edge{u}{v} in~$G$,
    $c(u) \ne c(v)$ and, for each arc \arc{u}{v} in~$G$, $c(u)<c(v)$.  For
    a mixed graph~$G$, the \emph{chromatic number} $\chi(G)$ is the smallest
  number of colors in any proper coloring of~$G$.
  A \emph{directional interval graph} is a mixed graph whose vertices
  correspond to intervals on the real line.  Such a graph has an edge
  between every two intervals where one is contained in the other and an
  arc between every two overlapping intervals, directed towards the
  interval that starts and ends to the right. \par\quad
  Coloring such graphs has applications in routing edges in layered
  orthogonal graph drawing according to the Sugiyama framework; the colors correspond to the tracks for
  routing the edges.
  We show how to recognize directional interval graphs, and how to
  compute their chromatic number efficiently.
  On the other hand, for \emph{mixed interval graphs},
  \ie, graphs where two intersecting intervals can be connected by an
  edge or by an arc in either direction arbitrarily,
  we prove that computing the chromatic number is \NP-hard.

  \keywords{Mixed graphs \and mixed interval graphs \and directed interval
    graphs \and recognition \and proper coloring}
\end{abstract}

\section{Introduction}
\label{sec:intro}

A \emph{mixed graph} is a graph that contains both undirected edges
and directed arcs.  Formally, a mixed graph $G$ is a tuple~$(V,E,A)$
where~$V=V(G)$ is the set of vertices, $E=E(G)$ is the set of edges,
and~$A=A(G)$ is the set of arcs.  We require that any two vertices are
connected by at most one edge or arc.
For a mixed graph $G$, let $U(G)=(V(G),E')$ denote the
\emph{underlying undirected graph}, where $E'=E(G) \cup \{ \edge{u}{v}
\colon \arc{u}{v} \in A(G) \text{ or } \arc{v}{u} \in A(G)\}$.

A \emph{proper coloring} of a mixed graph $G$ is a function $c$
that assigns a positive integer to every vertex in $G$, satisfying
$c(u) \neq c(v)$ for every edge \edge{u}{v} in $G$, and $c(u) < c(v)$
for every arc \arc{u}{v} in $G$.  It is easy to see that a mixed graph
admits a proper coloring if and only if the arcs of~$G$ do not induce a
directed circuit.  For a mixed graph~$G$ with no directed circuit, we define the chromatic
number $\chi(G)$ as the smallest number of colors in any proper
coloring of~$G$.

The concept of mixed graphs was introduced by Sotskov and
Tanaev~\cite{SotskovT76} and reintroduced by Hansen, Kuplinsky, and de
Werra~\cite{HansenKW97} in the context of proper colorings of mixed
graphs.  Coloring of mixed graphs was used to model problems in
scheduling with precedence constraints~\cite{Sotskov20}.
It is \NP-hard in general, and it
was considered for some restricted graph classes, \eg, when the
underlying graph is a tree, a
series-parallel graph, a graph of bounded
tree-width, or a bipartite graph~\cite{FurmanczykKZ08Trees,FurmanczykKZ08SP,RiesW08}.  Mixed
graphs have also been studied in the context of (quasi-) upward planar
drawings~\cite{Binucci2016,Binucci2014,fkptw-upmpg-14}, and extensions of partial
orientations~\cite{BangJensenHZ18,KlavikKORSSV17}.

Let $\mathcal{I}$ be a set of closed non-degenerate intervals on the
real line.  The \emph{intersection graph} of $\mathcal{I}$ is the
graph with vertex set $\mathcal{I}$ where two vertices are adjacent if
the corresponding intervals intersect.  An \emph{interval graph} is a
graph~$G$ that is isomorphic to the intersection graph of some set
$\mathcal{I}$ of intervals.  We call $\mathcal{I}$ an \emph{interval
  representation} of $G$, and for a vertex $v$ in $G$, we write
$\mathcal{I}(v)$ to denote the interval that represents $v$.  A
\emph{mixed interval graph} is a mixed graph~$G$ whose underlying
graph $U(G)$ is an interval graph.

For a set $\mathcal{I}$ of closed non-degenerate intervals on the real line, the
\emph{directional intersection graph} of $\mathcal{I}$ is a mixed
graph $G$ with vertex set $\mathcal{I}$ where, for every two vertices
$u = \sbrac{l_u,r_u}$, $v = \sbrac{l_v,r_v}$ with $u$ starting to the left of $v$, \ie, $l_u \leq l_v$,
exactly one of the following conditions holds:
\begin{align*}
  \text{$u$ and $v$ are disjoint, \ie, } r_u < l_v          & \iff
    \text{$u$ and $v$ are independent in $G$,} \\
  \text{$u$ and $v$ overlap, \ie, }      l_u < l_v \leq r_u < r_v & \iff
    \text{arc \arc{u}{v} is in $G$,} \\
  \text{$u$ contains $v$, \ie, }         r_v \leq r_u       & \iff
    \text{edge \edge{u}{v} is in $G$.}
\end{align*}
A \emph{directional interval graph} is a mixed graph $G$ that is
isomorphic to the directional intersection graph of some
set~$\mathcal{I}$ of intervals.
We call $\mathcal{I}$ a \emph{directional representation} of $G$.
Similarly to interval graphs, a directional interval graph may have several different directional representations.
As there is no directed circuit in a directional interval graph $G$, $\chi(G)$ is well defined.
Observe that the endpoints in any directional representation can be perturbed so that every endpoint is unique, and the modified intervals represent the same graph.

Further, we generalize directional interval graphs and directional representations
to \emph{bidirectional interval graphs} and \emph{bidirectional representations}.
There, we assume that we have two types of intervals, which we call
\emph{left-going} and \emph{right-going}.
For left-going intervals, the edges and arcs are defined as in directional intersection graphs.
For right-going intervals, the symmetric definition applies, that is,
we have an arc $\arc{u}{v}$ if and only if $l_v < l_u \leq r_v < r_u$.
Moreover, there is an edge for every pair of a left-going and a
right-going interval that intersect.

\begin{figure}[tb]
	\centering
	\includegraphics{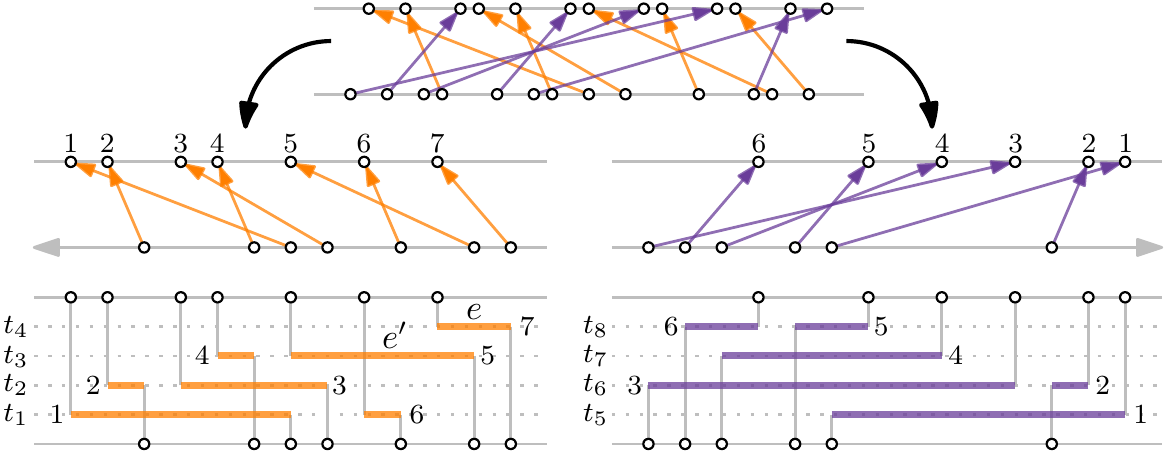}
	\caption{Separate greedy assignment of left-going and right-going
		edges to tracks.}
	\label{fig:edge-routing}
\end{figure}

Interval graphs are a classic subject of algorithmic graph theory
whose applications range from scheduling problems to analysis of
genomes~\cite{g-agtpg-80}.  Many problems that are \NP-hard for
general graphs can be solved efficiently for interval graphs.
In particular, the chromatic number of (undirected) interval
graphs~\cite{g-agtpg-80} and directed acyclic graphs~\cite{HansenKW97}
can be computed in linear time.

In this paper we combine the research directions of coloring geometric
intersection graphs and of coloring mixed graphs, by studying the
coloring of mixed interval graphs.
Our research is also motivated by the following application.

A subproblem that occurs when drawing layered graphs according to the
Sugiyama framework \cite{stt-mvuhss-TSMC81} is the edge routing step.
This step is applied to every pair of consecutive layers.  Zink et
al.~\cite{zwbw-ldugg-CGTA22} formalize this
for orthogonal edges as follows.  Given a set
of points on two horizontal lines (corresponding to the vertices on
two consecutive layers) and a perfect matching between the points on
the lower and those on the upper line, connect the matched pairs of
points by x- and y-monotone rectilinear paths.  Since we can assume
that no two points have the same x-coordinate, each pair of
points can be connected by a path that consists of three axis-aligned
line segments; a vertical, a horizontal, and another vertical one; see
\cref{fig:edge-routing}.  We refer to the interval that corresponds to
the vertical projection of an edge to the x-axis as the \emph{span} of
that edge.
We direct all edges upward.  This allows us to classify the edges into
\emph{left-} vs.\ \emph{right-going}.

Now the task is to map the horizontal pieces to horizontal ``tracks''
between the two layers such that no two such pieces overlap and no two
edges cross twice.  This implies that any two edges whose spans
intersect must be mapped to different tracks.  If there is a
left-going edge~$e$ whose span overlaps that of another left-going
edge~$e'$ that lies further to the left (see
\cref{fig:edge-routing}), then $e$ must be mapped to a
higher track than~$e'$ to avoid crossings.  The symmetric statement
holds for pairs of right-going edges.  The aim is to minimize the
number of tracks in order to get a compact layered drawing of the
original graph.
This corresponds to minimizing the number of colors in a proper
coloring of a bidirectional interval graph.
Zink et al.\ solve this combinatorial problem heuristically.
They greedily construct two colorings (of left-going edges and of
right-going edges) and combine the colorings by assigning separate
tracks to the two directions; see \cref{fig:edge-routing}.

\paragraph{Our contribution.}

We first show that the above-mentioned greedy algorithm of Zink et
al.~\cite{zwbw-ldugg-CGTA22} colors directional interval graphs with
the minimum number of colors; see \cref{sec:greedy}.  This yields a
simple 2-approximation algorithm for the bidirectional case.
Then, we prove that computing the chromatic number of a mixed interval
graph is \NP-hard; see \cref{sec:hardness}.  This result extends to
proper interval graphs; see \cref{sec:proper}.
Finally, we present an efficient algorithm that recognizes directional
interval graphs; see \cref{sec:recognition}.  Our algorithm is based
on \PQ-trees and the recognition of two-dimensional posets.  It can
construct a directional interval representation of a yes-instance in
quadratic time.

We postpone the proofs of statements with a (clickable)
``$\star$'' to the appendix.

\section{Coloring Directional Interval Graphs}
\label{sec:greedy}
We prove that the greedy algorithm of Zink et
al.~\cite{zwbw-ldugg-CGTA22} computes an optimal coloring for a given directional interval representation of $G$.
If we are not given a representation (i.e., a set of intervals)
but only the graph, we obtain
a representation in quadratic time by \cref{thm:recognition}.
The greedy algorithm proceeds analogously to the classic greedy coloring algorithm for (undirected) interval graphs.
Also our optimality proof follows, on a high level, the strategy of relating the coloring to a large clique.
In our setting, however, the underlying geometry is more intricate,
which makes the optimality proof as well as a fast implementation more
involved. 
The algorithm works as follows; see \cref{fig:edge-routing} (left) for an example.
\smallskip

\noindent
\fbox{%
	\begin{minipage}{.97\linewidth}
		\textsc{Greedy Algorithm.}
		Iterate over the given intervals in increasing order of
		their left endpoints.  For each interval~$v$, assign $v$ the
		smallest available color~$c(v)$.  A~color~$k$ is \emph{available}
		for $v$ if, for any interval~$u$ that has already been colored,
		$k \ne c(u)$ if~$u$ contains~$v$ and $k>c(u)$ if~$u$ overlaps~$v$.
	\end{minipage}
}

\medskip

A naive implementation of the greedy algorithm runs in quadratic time. Using augmented binary search trees, we can speed it up to optimal $O(n \log n)$ time.

\begin{restatable}[{\hyperref[lem:runtimegreedy*]{$\star$}}]{lemma}{RuntimeGreedy}
	\label{lem:runtimegreedy}
	The greedy algorithm can be implemented to color $n$
	intervals in $\Oh{n \log n}$ time, which is
	optimal assuming the comparison-based model.
\end{restatable}

Next we show that the greedy algorithm computes an optimal proper
coloring.  This also yields a simple 2-approximation for the
bidirectional case.

\begin{theorem}
	\label{clm:greedy-is-optimal}
	Given a directional representation of a directional interval
	graph~$G$, the greedy algorithm computes a proper coloring of~$G$
	with~$\chi(G)$ many colors.
\end{theorem}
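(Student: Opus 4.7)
The plan is to prove the two directions of $\chi(G) = \max_v c(v)$, where $c$ denotes the greedy coloring. The upper bound is immediate, since by construction greedy never violates an edge or arc constraint, so $c$ is a proper coloring. The rest of the proof is devoted to the matching lower bound. For the lower bound, I would pick an interval $v^*$ attaining the largest greedy color $k$, and aim to exhibit a subgraph $W \subseteq V(G)$ with $v^* \in W$, $|W| = k$, and $\chi(G[W]) \geq k$.

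I would define $W$ recursively, mirroring the greedy rule. If $v^*$ has no previously-colored overlapper, then by the greedy rule the $k-1$ containers of $v^*$ use precisely the colors $1, \ldots, k-1$, and I take $W$ to be $\{v^*\}$ together with those containers; this is a clique of size $k$ at the point $l_{v^*}$. Otherwise, let $u^*$ be the previously-colored overlapper of $v^*$ with the largest color $j = c(u^*)$; because overlappers only forbid colors $\le j$, the missing colors $j{+}1,\ldots,k{-}1$ must each be realized by a container $u_i$ of $v^*$ with $c(u_i) = i$. I then set $W := W_{u^*} \cup \{v^*, u_{j+1}, \ldots, u_{k-1}\}$ with $W_{u^*}$ obtained recursively; a simple induction shows $|W| = k$.

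To establish $\chi(G[W]) \geq k$, I plan to prove the stronger statement that every pair of distinct elements of $W$ is either adjacent in $U(G)$ or joined by a directed arc-path in $G$; since both relations force distinct colors in a proper coloring, $G[W]$ must then use $|W| = k$ distinct colors. Unfolding the recursion, $W$ decomposes into a ``chain'' $x_0, x_1, \ldots, x_s = v^*$ with $x_i$ overlapping $x_{i+1}$ (hence arcs $x_i \to x_{i+1}$), together with ``filler'' container sets $T_{x_i}$ at each chain level whose elements $t$ satisfy $l_t \leq l_{x_i}$ and $r_t \geq r_{x_i}$. The chain endpoints are strictly monotone, $l_{x_0} < \cdots < l_{x_s}$ and $r_{x_0} < \cdots < r_{x_s}$, and satisfy the overlap inequalities $l_{x_{i+1}} \leq r_{x_i}$; these are the geometric facts on which all case analyses below rest.

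The main obstacle is the case of two containers $t_1 \in T_{x_i}$ and $t_2 \in T_{x_j}$ at different chain levels ($i < j$), which may appear geometrically separated. Here the plan is to locate a pivot chain element $x_k$ connecting them: either $r_{t_1} \geq r_{x_j}$, in which case both $t_1$ and $t_2$ cover the segment $[l_{x_j}, r_{x_j}]$ and hence intersect; or else the smallest $k > i$ with $r_{t_1} < r_{x_k}$ lies in $(i, j]$, and one verifies using the chain overlap inequality $l_{x_k} \leq r_{x_{k-1}} \leq r_{t_1}$ together with $l_{t_1} \leq l_{x_i} < l_{x_k}$ that $t_1$ overlaps $x_k$ on the left, giving the arc $t_1 \to x_k$; at this point $t_2$ either shares the point $l_{x_k}$ with $t_1$ (whenever $t_2$ contains $x_k$) or one obtains the arc-path $t_1 \to x_k \to t_2$, possibly extended through further chain elements if $t_2$ is disjoint from $x_k$. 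Analogous but simpler geometric arguments handle container-versus-chain and chain-versus-chain pairs, completing the key lemma and thereby the theorem.
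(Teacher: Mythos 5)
Your proposal follows essentially the same route as the paper: your ``chain'' $x_0,\ldots,x_s$ is the paper's ``staircase'' (with reversed indexing), your filler sets $T_{x_i}$ are the paper's sets $S_i$, and your claim that every pair in $W$ is either adjacent in $U(G)$ or joined by a directed arc-path in $G$ is exactly the paper's statement that $S$ is a clique in the underlying graph of the transitive closure $G^+$. The pivot-based case analysis for two containers at different chain levels likewise mirrors the paper's argument via the extremal step indices $j,k$ that still meet each container.
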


\begin{proof}
	The \emph{transitive closure} $G^+$ of $G$ is the graph that we obtain
	by exhaustively adding transitive arcs, i.e., if there are arcs
	$\arc{u}{v}$ and $\arc{v}{w}$, we add the arc $\arc{u}{w}$ if
	absent. Clearly, no pair of adjacent intervals in
	the underlying undirected graph~$U(G^+)$ of $G^+$ can have
	the same color in a proper coloring of~$G$. Therefore,
	$\omega(U(G^+))\leq \chi(G)$ where $\omega(U(G^+))$ denotes the size
	of a largest clique in $U(G^+)$. We show below that the greedy algorithm
	computes a coloring with at most $\omega(U(G^+))$ many colors, which
	must therefore be optimal. For $v\in V$ let
	$\mathcal{I}_\textsf{in}(v)$ be the set of intervals having an arc to~$v$
	in $G$.
	
	Let $c$ be the coloring computed by our greedy coloring algorithm.
	Since we always pick an available color, $c$ is a proper
	coloring. To prove optimality of~$c$, we show the existence of a
	clique in $U(G^+)$ of cardinality $c_{\max}=\max_{v\in V}c(v)$.
	
	\begin{figure}[t]
		\centering \includegraphics[page=1]{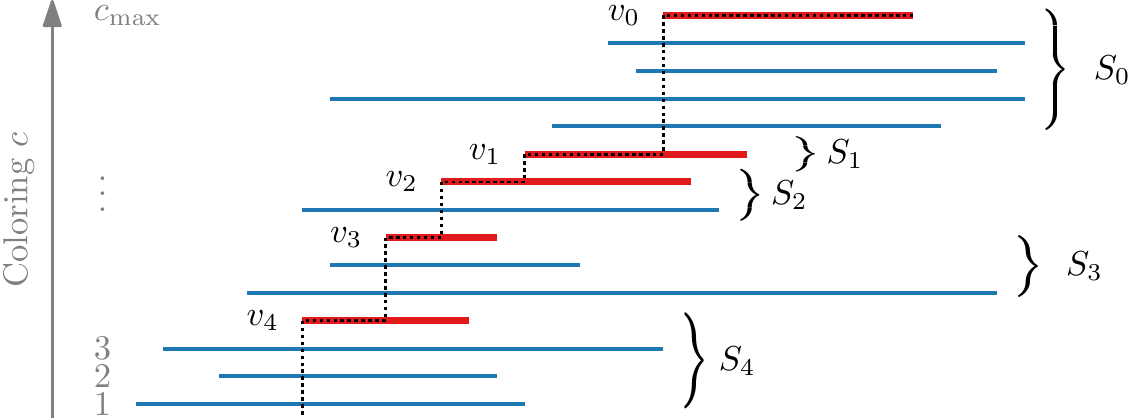}
		\caption{A staircase and its intermediate intervals, which form a
			clique in $U(G^+)$.}
		\label{fig:staircase}
	\end{figure}
	
	Consider an interval $v_0 = [l_0, r_0]$ of color~$c_{\max}$. Among
	$\mathcal{I}_\textsf{in}(v_0)$, let~$v_1$ be the unique interval
	with the largest color (all intervals in
	$\mathcal{I}_\textsf{in}(v_0)$ have different colors as they
	share the point $l_0$).  We call $v_1$ the \emph{step below $v_0$}.
	We repeat this argument to find the step $v_2$ below $v_1$ and
	so on.  For some $t \ge 0$, there is a $v_t$ without a step below
	it, namely where $\mathcal{I}_\textsf{in}(v_t) = \emptyset$.  We
	call the sequence $v_0, v_1, \dots, v_t$ a \emph{staircase} and each
	of its intervals a \emph{step}; see~\cref{fig:staircase}.  Clearly,
	$\arc{v_j}{v_i}$ is an arc of~$G^+$ for~$0 \le i < j \le t$.  In
	particular, the staircase is a clique of size $t+1$ in $U(G^+)$.
	Next we argue about the intervals with colors in-between the steps.
	
	For a step~$v_i = [l_i,r_i]$, $i \in \set{0,\dots,t}$, let~$S_i$
	denote the set of intervals that contain the point~$l_i$ and have a
	color~$x \in \set{c(v_{i+1})+1,c(v_{i+1})+2,\dots,c(v_i)}$; see
	\cref{fig:staircase}.  Note that~$v_i \in S_i$ and, by the
	definition of steps, each interval in~$S_i$ contains~$v_i$.
	Observe that $|S_i| = c(v_{i}) - c(v_{i+1})$, as
	otherwise the greedy algorithm would have assigned a smaller color to~$v_i$.  It
	follows that~$c_{\max} = \sum_{i=0}^t |S_i|$.
	
	We claim that~$S = \bigcup_{i=0}^t S_i$ is a clique in~$U(G^+)$.
	Let~$u \in S_i$, $v \in S_l$ such that~$u \cap v = \emptyset$
	(otherwise they are clearly adjacent in~$U(G^+)$). Assume without
	loss of generality that~$i<l$.  Let~$j,k$ be the largest and
	smallest index so that~$v_j \cap u \ne \emptyset$ and
	$v_k \cap v \ne \emptyset$, respectively.  Observe
	that~$u \cap v = \emptyset$, $u \cap v_{i+1} \ne \emptyset$,
	and~$v \cap v_{l-1} \ne \emptyset$ imply~$i < j < l$ and
	$i < k < l$.  Since~$u$ does not intersect~$v_{j+1}$, it overlaps
	with~$v_j$, i.e., $G$ contains the arc~$\arc{v_j}{u}$ and likewise,
	since~$v$ does not intersect $v_{k-1}$, it overlaps with~$v_k$,
	i.e., $G$ contains the arc~$\arc{v}{v_k}$.
	
	If $j<k$, then $G^+$ contains $\arc{v}{v_k}$ and~$\arc{v_k}{v_j}$,
	and therefore $\arc{v}{v_j}$.  If $j\geq k$, then~$v_j$ is adjacent
	to both~$u$ and~$v$, and since~$u,v$ are disjoint, $v_j$ overlaps with~$u$ and~$v$, i.e., $G$ contains $\arc{v}{v_j}$.  In either
	case, the presence of \arc{v}{v_j} and \arc{v_j}{u} implies that
	$G^+$ contains \arc{v}{u}.  It follows that~$S$ forms a clique
	in~$U(G^+)$.
\end{proof}

\begin{restatable}[{\hyperref[cor:approx*]{$\star$}}]{corollary}{Approx}
	\label{cor:approx}
	There is an $\Oh{n \log n}$-time algorithm that, given a bidirectional interval representation, computes a 2-approximation of an optimal proper coloring of the corresponding bidirectional interval graph.
\end{restatable}

\section{Coloring Mixed Interval Graphs}
\label{sec:hardness}
In this section, we show that computing the chromatic number of a mixed interval graph is \NP-hard.
Recall that the chromatic number can be computed efficiently for
interval graphs~\cite{g-agtpg-80}, directed acyclic
graphs~\cite{HansenKW97}, and directional interval graphs
(\cref{clm:greedy-is-optimal}).  In other words, coloring interval
graphs becomes \NP-hard only if edges and arcs are combined in a
non-directional way.

\begin{theorem}\label{thm:hardness}
	Given a mixed interval graph $G$ and a number $k$,
	it is \NP-complete to decide whether $G$ admits a proper coloring
	with at most $k$ colors.
\end{theorem}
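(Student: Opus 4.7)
Membership in \NP is straightforward: given a candidate assignment of integers to vertices, one verifies in polynomial time that every edge has distinct endpoints, every arc is increasing, and only $k$ colors are used. For hardness I would reduce from \threesat (possibly from a restricted variant such as a bounded-occurrence version, which may simplify the gadget layout). Given a formula $\varphi$ on variables $x_1,\dots,x_n$ and clauses $C_1,\dots,C_m$, the plan is to construct in polynomial time an interval representation of a mixed interval graph $G_\varphi$ together with a target number $k$ of colors, a small absolute constant, such that $G_\varphi$ admits a proper $k$-coloring if and only if $\varphi$ is satisfiable.

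The construction would rely on three kinds of gadgets. A \emph{palette backbone} of $k$ intervals sharing a common point fixes the color set $\set{1,\dots,k}$; additional arcs from short intervals to backbone members then pin specific colors at specific horizontal positions, providing a coordinate system against which local gadgets can be calibrated. A \emph{variable gadget} for each $x_i$ is a small cluster of intervals whose admissible $k$-colorings fall into exactly two classes distinguished by the color of a dedicated \emph{literal interval}, encoding the values \emph{true} and \emph{false}. A \emph{clause gadget} for each $C_j=\ell_{j,1}\vee\ell_{j,2}\vee\ell_{j,3}$ consists of a few intervals connected by edges and arcs to the three literal intervals, arranged so that a critical interval has a valid color precisely when at least one of its three literals is \emph{true}. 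To carry a literal's color from its variable gadget to every clause that references it, I would use long \emph{lane} intervals paired with short repeater intervals whose arcs propagate the chosen color monotonically along the real line.

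The main obstacle is the strict one-dimensionality of interval representations: unlike in planar-graph reductions, lanes cannot freely cross or reorder, yet the three literals of a clause may come from arbitrarily placed variable gadgets. Any two lanes forced to pass through the same region will overlap, and in a pure interval graph this would impose undesirable equality-forbidding edges that leak information between lanes. The key device is to exploit the mixed nature of the graph: at each unavoidable crossing I would use arcs rather than edges, so the only constraint induced is an inequality that is compatible with both lanes' intended colorings (for example, forcing a \emph{true}-colored lane to sit strictly above any \emph{false}-colored lane in the palette order). Designing these crossings so that every proper $k$-coloring of the combined structure still encodes a coherent Boolean value on each lane, while every satisfying assignment of $\varphi$ extends to a global $k$-coloring, is the bulk of the combinatorial work.

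With the gadgets in place, both directions of the reduction would follow the standard template. For completeness, a satisfying assignment is used to color each variable gadget in the corresponding class, propagate the choice along every lane, and then complete each clause gadget using the literal that satisfies it. For soundness, given any proper $k$-coloring, I would read off the truth value of each $x_i$ from the color of its literal interval and argue, by inspecting the constant-size clause gadget, that no clause can be colored if all three of its incident lanes carry \emph{false}. Verifying these two directions reduces to enumerating the finitely many local configurations of each gadget, a routine but indispensable case analysis whose correctness hinges entirely on the crossing-gadget design discussed above.
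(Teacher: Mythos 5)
The high-level shape of your reduction (3\textsc{-SAT}, variable gadgets, clause gadgets, a color-pinning ``backbone'') matches the paper's, but there are two linked decisions that leave a genuine gap and make your sketch diverge in a way that is likely not repairable as stated.

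First, you set the target $k$ to be ``a small absolute constant.'' The paper instead uses $k = 6n$, six colors per variable, and this choice is not cosmetic. With $6n$ colors, the frame pins $f_i^j$ to color $6(i-1)+j$, and every interval associated with variable $v_i$ (the two long intervals $v_i^{\mathsf{true}}, v_i^{\mathsf{false}}$, the occurrence intervals $o_i^j$, the blockers $b_i^j$) lives entirely inside the private band of six colors reserved for $v_i$. Clause gadgets are then narrow \emph{vertical} strips cutting across all bands. As a consequence the ``lanes'' you worry about never have to cross at all: each variable's information travels horizontally in its own color band, each clause is evaluated in a short $x$-range, and the only inter-variable coupling happens through the frame and through the single clause interval $s_j$ seeking a free slot. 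Your constant-$k$ plan forfeits this separation, and it is not at all clear that the chromatic-number problem for mixed interval graphs remains \NP-hard when $k$ is bounded (the paper makes no such claim, and the reduction crucially scales $k$ with $n$).

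Second, and relatedly, you explicitly identify the crossing gadget as ``the bulk of the combinatorial work'' and then do not construct it. That is precisely the step that would or would not make your plan work, and your one sentence about it (``use arcs rather than edges... forcing a true-colored lane to sit strictly above any false-colored lane'') does not resolve the core difficulty: an arc between two lanes imposes a strict inequality between their colors in \emph{every} proper coloring, so it globally orders the two lanes and thereby couples their truth values rather than letting them cross independently. A related issue afflicts your ``repeater'' idea for propagating a literal's color along a lane: a chain of arcs forces strictly increasing colors and therefore cannot preserve a color over more than $k$ steps, while a chain of edges forbids equality but does not propagate it. In short, the crossing/propagation machinery you would need is not designed, and the paper's construction is built specifically to avoid ever needing it. You should either adopt the $k = \Theta(n)$ frame-and-band layout (which makes the correctness argument a finite local check, as in the paper) or supply a concrete, verified crossing gadget together with an argument that a bounded palette suffices.
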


\begin{proof}
	Containment in \NP is clear since a specific coloring with $k$ colors
	serves as a certificate of polynomial size.  We prove \NP-hardness by
	a polynomial-time reduction from \threesat.  The high-level idea is as
	follows.  We are given a \threesat formula $\Phi$ with variables
	$v_1, v_2, \dots, v_n$, and clauses $c_1, c_2, \dots, c_m$, where each
	clause contains at most three literals.  A literal is a variable or a
	negated variable~-- we refer to them as a \emph{positive} or a
	\emph{negative} occurrence of that variable.  From $\Phi$, we
	construct in polynomial time a mixed interval graph~$G_\Phi$ with the
	property that $\Phi$ is satisfiable if and only if $G_\Phi$ admits a
	proper coloring with $6n$ colors.
	
	To prove that~$G_\Phi$ is a mixed interval graph, we present an interval representation of~$U(G_\Phi)$ and specify which pairs of intersecting intervals are connected by a directed arc, assuming that all other pairs of intersecting intervals are connected by an edge.
	The graph $G_\Phi$ has the property that the color of many of the intervals is fixed in every proper coloring with $6n$ colors.
	In our figures, the x-dimension corresponds to the real line that contains the interval, whereas we indicate its color
	by its position in the~y-dimension~-- thus, we also refer to a color as a \emph{layer}.
	In this model, our reduction has the property that~$\Phi$ is satisfiable if and only if the intervals of $G_\Phi$ admit a drawing that fits into $6n$ layers.
	
	Our construction consists of a \emph{frame} and $n$ \emph{variable gadgets} and $m$ \emph{clause gadgets}.
	Each variable gadget is contained in a horizontal strip of height~$6$ that spans the whole construction,
	and each clause gadget is contained in a vertical strip of width~$4$ and height~$6n$.
	The strips of the variable gadgets are pairwise disjoint,
	and likewise the strips of the clause gadgets are pairwise disjoint.
	
	\paragraph{Frame.}
	
	See \cref{fig:frame-structure}.
	The frame consists of six intervals~$f_i^1, f_i^2,\dots,f_i^6$ for each of the variables~$v_i$, $i=1,\dots, n$.
	All of these intervals start at position~$0$ and extend from the left into the construction.
	The intervals~$f_i^2,f_i^4,f_i^6$ end at position~$1$.
	The intervals~$f_i^1$ and~$f_i^5$ extend to the very right of the construction.
	Interval~$f_i^3$ ends at position~$3$.
	Further, there are arcs~$\arc{f_i^j}{f_i^{j+1}}$ for~$j=1,\dots,5$ and~$\arc{f_i^6}{f_{i+1}^1}$ for~$i=1,\dots,n-1$.
	This structure guarantees that any proper coloring with colors $\set{1,2,\ldots,6n}$ assigns color $6(i-1)+j$ to interval $f_i^j$.
	
	\begin{figure}[t]
		\begin{subfigure}{.45 \linewidth}
			\centering
			\includegraphics[page=1]{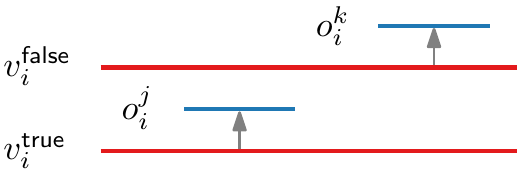}
			\caption{$v_i$ is \textsf{true}
					and appears positively in~$c_j$, and negatively
					in~$c_k$.}
			\label{fig:variable-gadget-true}
		\end{subfigure}
		\hfill
		\begin{subfigure}{.45 \linewidth}
			\centering
			\includegraphics[page=2]{variable-gadget}
			\caption{$v_i$ is \textsf{false}
					and appears positively in~$c_j$, and negatively
					in~$c_k$.}
			\label{fig:variable-gadget-false}
		\end{subfigure}
		
		\smallskip
		
		\begin{subfigure}{1.0 \linewidth}
			\centering
			\includegraphics[page=1]{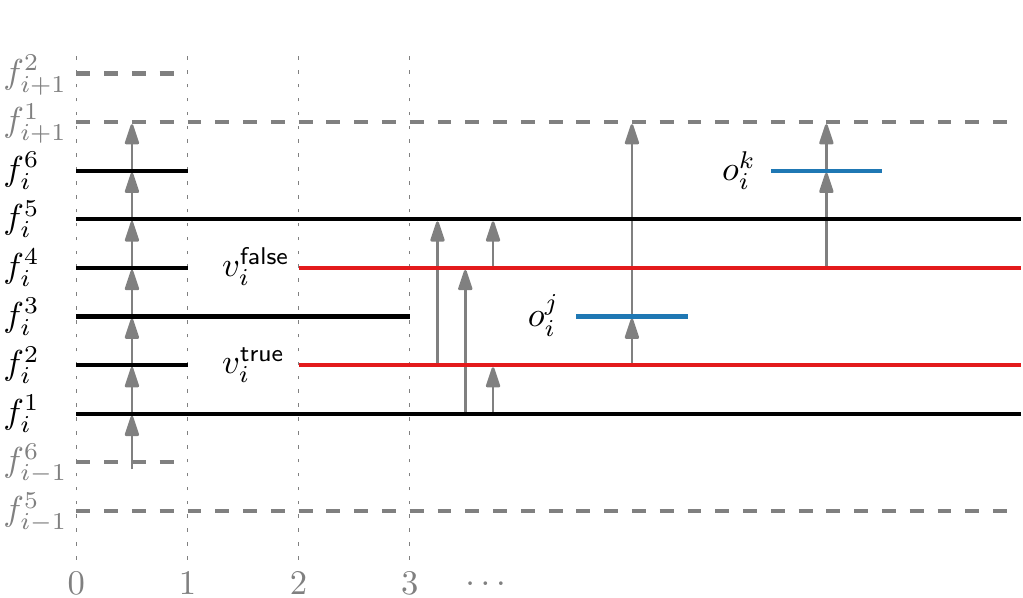}
			\caption{Frame.}
			\label{fig:frame-structure}
		\end{subfigure}
		\caption{A variable gadget for a variable $v_i$.}
		\label{fig:variable-gadget}
	\end{figure}
	
	\paragraph{Variable Gadget.}
	
	See \cref{fig:variable-gadget-true,fig:variable-gadget-false}.
	For each variable~$v_i$, $i = 1, \dots, n$, we have two intervals $v_i^\mathsf{false}$ and $v_i^\mathsf{true}$, which start at position~2 and extend to the very right of the construction.
	Moreover, they both have an incoming arc from~$f_i^1$ and an outgoing arc to~$f_i^5$.
	This guarantees that they are drawn in the layers of~$f_i^2$ and~$f_i^4$, however their ordering can be chosen freely.
	We say that $v_i$ is set to \textsf{true} if $v_i^{\mathsf{true}}$ is below $v_i^{\mathsf{false}}$, and $v_i$ is set to \textsf{false} otherwise.
	
	For each occurrence of $v_i$ in a clause $c_j$, $j = 1, \dots, m$, we create an interval~$o_i^j$ within the clause gadget of~$c_j$.
	There is an arc $\arc{v_i^\mathsf{true}}{o_i^j}$ for a positive occurrence and an arc $\arc{v_i^\mathsf{false}}{o_i^j}$ for a negative occurrence as well as an arc~$\arc{o_i^j}{f_{i+1}^1}$ if $i<n$.
	This structure guarantees that $o_i^j$ is drawn either in the same layer as~$f_i^3$ or as~$f_i^6$.
	However, drawing $o_i^j$ in the layer of~$f_i^3$
	(which lies between $v_i^\mathsf{true}$ and~$v_i^\mathsf{false}$) is possible if and only if
	the chosen truth assignment of $v_i$ satisfies~$c_j$.
	
	\paragraph{Clause Gadget.}
	
	\begin{figure}[t]
		\begin{subfigure}{.45 \linewidth}
			\centering
			\includegraphics[page=3]{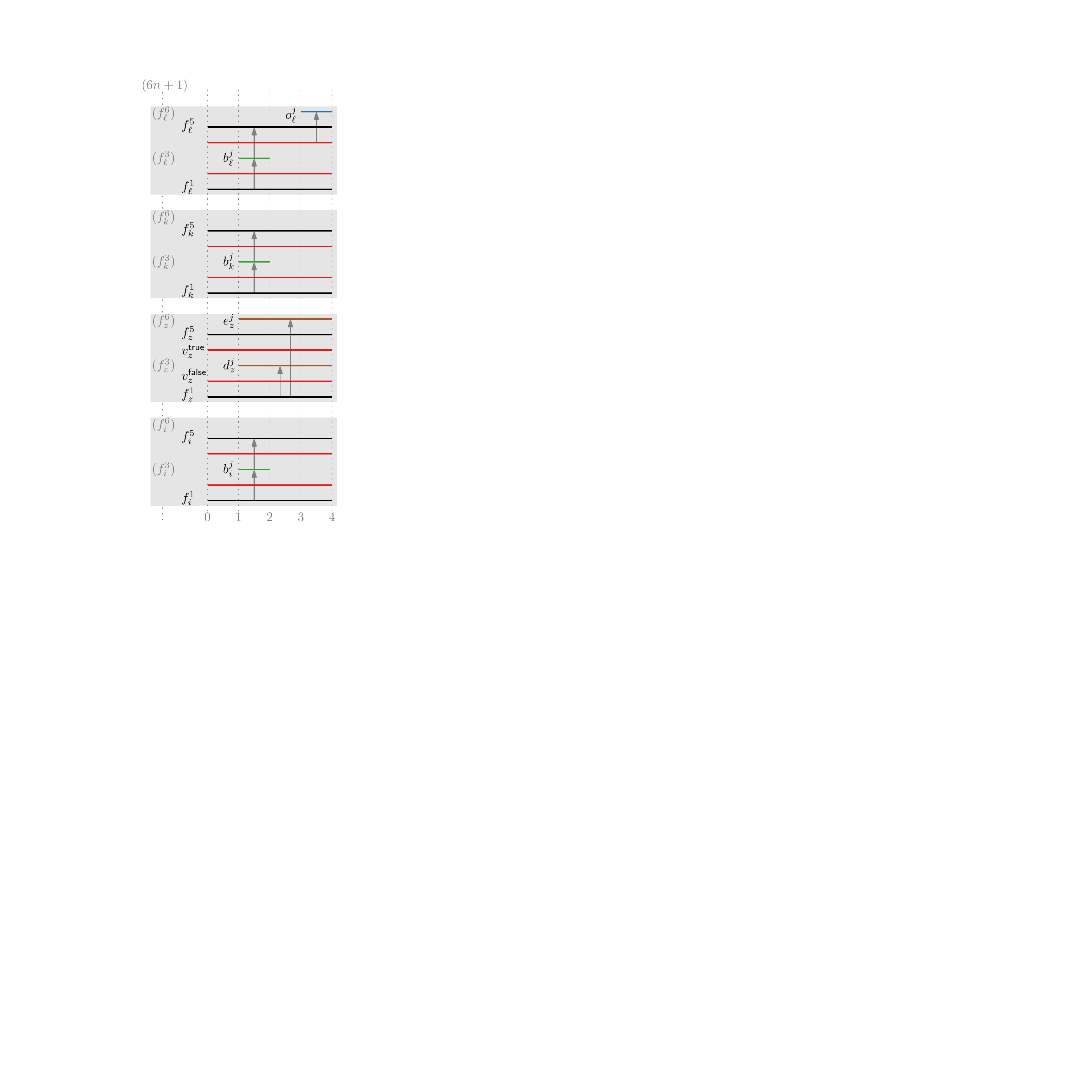}
			\caption{$c_j$ is satisfied.}
			\label{fig:clause-gadget-satisfied}
		\end{subfigure}
		\hfill
		\begin{subfigure}{.45 \linewidth}
			\centering
			\includegraphics[page=2]{clause-gadget}
			\caption{$c_j$ is not satisfied.}
			\label{fig:clause-gadget-not-satisfied}
		\end{subfigure}
		\caption{A clause gadget for a clause $c_j = v_i \lor \lnot
			v_k \lor v_\ell$, where $z \notin \set{i,k,\ell}$.}
		\label{fig:clause-gadget}
	\end{figure}
	
	See \cref{fig:clause-gadget}.
	Our clause gadget starts at position~$4j$,
	relative to which we describe the following positions.
	Consider a fixed clause~$c_j$ that contains variables~$v_i,v_k,v_\ell$.
	We create an interval~$s_j$ of length $3$ starting at position~$1$.
	The key idea is that~$s_j$ can be drawn in the layer of~$f_i^6, f_k^6$ or~$f_\ell^6$, but only if $o_i^j$, $o_k^j$ or~$o_\ell^j$,
	each of which has length~$1$ and starts at position~$3$,
	is not drawn there.
	This is possible iff
	the corresponding variable satisfies the clause.
	
	To ensure that~$s_j$ does not occupy any other layer, we block all the other layers.
	More precisely, for each variable~$v_z$ with $z \notin \set{i,k,\ell}$, we create \emph{dummy intervals}~$d_z^j,e_z^j$ of length~$3$ starting at
	position~$1$ that have arcs from~$f_z^1$ and to~$f_{z+1}^1$.
	These arcs force
	$d_z^j, e_z^j$ to be drawn in the layers of~$f_z^3$ and~$f_z^6$, thereby ensuring that~$s_j$ is not placed in any layer associated with the variable~$z$.
	
	Similarly, for each $z \in \set{i,k,\ell}$, we create a blocker~$b_z^j$
	of length~$1$ starting at position~$1$
	that has arcs from~$f_z^1$ and to~$f_z^5$.
	This fixes $b_z^j$ to the layer of~$f_z^3$ (since the layers of~$f_z^2$ and~$f_z^4$
	are occupied by~$v_z^\mathsf{true}$ and $v_z^\mathsf{false}$), thereby ensuring that,
	among all layers associated with~$v_z$, $s_j$ can only be drawn in the layer of~$f_z^6$.
	
	\paragraph{Correctness.}
	
	Consider for each clause $c_j$ with variables $v_i$, $v_k$, and
	$v_\ell$ the corresponding clause gadget.  To achieve a total height
	of at most $6n$, $s_j$ needs to be drawn in the same layer as some interval
	of the frame.  Due to the presence of the dummy intervals,
	the only available layers are the ones of~$f_z^6$ for~$z \in \{i,k,\ell\}$.
	However, the layer of~$f_z^6$ is only free if~$o_z^j$ is not there, which is the
	case if and only if~$o_z^j$ is drawn in the layer of~$f_z^3$.  By
	construction, this is possible if and only if the variable~$v_z$ is in
	the state that satisfies clause~$j$.
	Otherwise we need an extra $(6n+1)$-th layer.
	Both situations are
	illustrated in \cref{fig:clause-gadget}.  Hence, $6n$ layers are
	sufficient if and only if the variable gadgets represent a truth
	assignment that satisfies all the clauses of $\Phi$.  The mixed interval
	graph $G_\Phi$ has polynomial size and can be constructed in
	polynomial time.
\end{proof}

A \emph{proper interval graph} is an interval graph that
admits an interval representation of the underlying graph in which
none of the intervals properly contains another interval.
We can slightly adjust the reduction presented in the proof of \cref{thm:hardness} to make $G_\Phi$ a \textit{mixed} proper interval graph.

\begin{restatable}[{\hyperref[clm:hardnesspropper*]{$\star$}}]{corollary}{HardnessPropper}
	\label{clm:hardnesspropper}
	Given a mixed proper interval graph $G$ and a number $k$,
	it is \NP-complete to decide whether $G$ admits a proper coloring
	with at most~$k$~colors.
\end{restatable}

\section{Recognizing Directional Interval Graphs}
\label{sec:recognition}
In this section we present a recognition algorithm for directional interval graphs.
Given a mixed graph~$G$, our algorithm decides whether $G$ is a
directional interval graph, and additionally if the answer is yes, it
constructs a set of intervals representing~$G$.
The algorithm works in two phases.
The first phase carefully selects a rotation of the \PQ-tree of~$U(G)$.
This fixes the order of maximal cliques in the interval representation of~$U(G)$.
In the second phase, the endpoints of the intervals are perturbed so that the edges and arcs in~$G$ are represented correctly.
This is achieved by checking that an auxiliary poset is two-dimensional.

\PQ-trees of interval graphs~\cite{LuekerB79} and realizers
of two-dimensional posets~\cite{McConnellS99} can be constructed in
linear time.  Our algorithm runs in quadratic time, but we suspect
that a more involved implementation can achieve linear running time.

For a set of pairwise intersecting intervals on the real line, let the \emph{clique point} be the leftmost point on the real line that lies in all the intervals.
Given an interval representation of an interval graph~$G$, we get a linear order of the maximal cliques of~$G$ by their clique points from left to right.
Booth and Lueker~\cite{LuekerB79} showed that a graph~$G$ is an interval graph if and only if the maximal cliques of~$G$ admit a \emph{consecutive arrangement}, \ie, a linear order such that, for each vertex~$v$, all the maximal cliques containing~$v$ occur consecutively in the order.
They have also introduced a data structure called \PQ-tree
that encodes all possible consecutive arrangements of~$G$.
We present our algorithm in terms of modified \PQ-trees
(\MPQ-trees, for short) as described by Korte and
M{\"o}hring~\cite{KorteM85,KorteM89}.
We briefly describe \MPQ-trees in the next few paragraphs;
see~\cite{KorteM89} for a proper introduction.

An \emph{\MPQ-tree}~$T$ of an interval graph~$G$ is a rooted, ordered tree with two types of nodes: \PQP-nodes and \PQQ-nodes, joined by links.
Each node can have any number of children and a set of consecutive links joining a \PQQ-node~$x$ with children is called a \emph{segment} of~$x$.
Further, each vertex $v$ in~$G$ is assigned either to one of the \PQP-nodes, or to a segment of some \PQQ-node.
Based on this assignment, we \emph{store} $v$ in the links of~$T$.
If~$v$ is assigned to a \PQP-node~$x$, we store $v$ in the link just above $x$ in $T$ (adding a dummy link above the root of $T$).
If $v$ is assigned to a segment of a \PQQ-node~$x$, we store $v$ in each link of the segment.
For a link \edge{x}{y}, let $S_{xy}$ denote the set of vertices stored in \edge{x}{y}.
We say that $v$ is \emph{above} (\emph{below}, resp.) a node~$x$ if $v$ is stored in any of the links on the upward path (in any of the links on some downward path, resp.) from $x$ in $T$.
We write $A^T_x$ ($B^T_x$, resp.) for the set of all vertices in~$G$ that are above (below, resp.) node~$x$.

The \emph{frontier} of~$T$ is the sequence of the sets $A^T_x$, where
$x$ goes through all leaves in~$T$ in the order of~$T$.
Given an \MPQ-tree~$T$, one can obtain another \MPQ-tree, which is called a \emph{rotation} of~$T$,
by arbitrarily permuting the order of the children of \PQP-nodes and
by reversing the orders of the children of some \PQQ-nodes.  The defining property
of the \MPQ-tree~$T$ of a graph $G$ is that each leaf $x$ of~$T$ corresponds to a maximal clique $A^T_x$ of~$G$ and the frontiers of rotations of~$T$ correspond bijectively to the consecutive arrangements of~$G$.
Observe that any two vertices adjacent in~$G$ are stored in links that are connected by an upward path in~$T$.
We say that~$T$ \emph{agrees} with an interval representation $\mathcal{I}$ of~$G$ if the order of the maximal cliques of~$G$ given by their clique points in $\mathcal{I}$ from left to right is the same as in the frontier of~$T$.
We assume the following properties of the \MPQ-tree (see \cite{KorteM89}, Lemma~2.2):
\begin{itemize}
	\item For a \PQP-node~$x$ with children $y_1,\ldots,y_k$, for every $i=1,\ldots,k$,
	there is at least one vertex stored in link \edge{x}{y_i} or below $y_i$, \ie, $S_{xy_i} \cup B^T_{y_i} \neq \emptyset$.
	\item For a \PQQ-node~$x$ with children $y_1,\ldots,y_k$, we have $k \geq 3$.
	Further, for $S_i=S_{xy_i}$, we have:
	\begin{itemize}
		\item $S_1 \cap S_k = \emptyset$,
		$B^T_{y_1} \neq \emptyset$, $B^T_{y_k} \neq \emptyset$,
		$S_{1} \subsetneq S_{2}$, $S_{k} \subsetneq S_{k-1}$,
		\item $(S_i \cap S_{i+1})\setminus S_1 \neq \emptyset$, $(S_{i-1} \cap S_{i})\setminus S_k \neq \emptyset$, for $i=2,\ldots,k-1$.
	\end{itemize}
\end{itemize}

A \emph{partially ordered set}, or a \emph{poset} for short, is a transitive directed acyclic graph.
A poset~$P$ is \emph{total} if, for every pair of vertices~$u$
and~$v$, there is either an arc \arc{u}{v} or an arc \arc{v}{u}
in~$P$.
We can conveniently represent a total poset~$P$ by a linear order of its vertices $v_1 < v_2 < \dots < v_n$ meaning that there is an arc \arc{v_i}{v_j} for each $1 \leq i < j \leq n$.
A poset~$P$ is \emph{two-dimensional} if the arc set of~$P$ is the
intersection of the arc sets of two total posets on the same set
of vertices as~$P$.
McConnell and Spinrad~\cite{McConnellS99} gave a linear-time algorithm
that, given a directed graph~$D$ as input, decides whether~$D$ is a
two-dimensional poset.  If the answer is yes, the algorithm also
constructs a \emph{realizer}, that is, (in this case) two linear
orders $(R_1,R_2)$ on the vertex set of~$D$ such that
\begin{align*}
\textrm{arc \arc{u}{v} is in~$D$} & \iff
\sbrac{\brac{\textrm{$u < v$ in $R_1$}} \wedge \brac{\textrm{$u < v$ in $R_2$}}} \textrm{.}
\end{align*}

The main result of this section is the following theorem.
\begin{restatable}[{\hyperref[thm:recognition*]{$\star$}}]{theorem}{ThmRecognition}
	\label{thm:recognition}
	There is an algorithm that, given a mixed graph $G$, decides
	whether $G$ is a directional interval graph.  The algorithm runs in
	$\Oh{|V(G)|^2}$ time
	and produces a directional representation of~$G$ if $G$ admits one.
\end{restatable}

The algorithm runs in two phases that we introduce in separate lemmas.

\begin{lemma}[Rotating \PQ-trees]\label{lem:recognition_rotation}
	There is an algorithm that, given a directional interval graph~$G$,
	constructs an \MPQ-tree~$T$ that agrees with some
	directional representation of~$G$.
\end{lemma}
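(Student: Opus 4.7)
The plan is to build an \MPQ-tree of~$U(G)$ and then find the correct rotation by interpreting the arcs of~$G$ as orientation constraints. Since~$U(G)$ is an interval graph (because $G$ is a directional interval graph), I would invoke Booth and Lueker's algorithm to obtain an \MPQ-tree~$T_0$ of~$U(G)$. Any directional representation~$\mathcal{I}$ of~$G$ induces a clique-point order of the maximal cliques of~$U(G)$, which is the frontier of some rotation~$T$ of~$T_0$. My task then reduces to reading off the correct rotation from the arcs alone.

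The arcs of~$G$ translate to rotation constraints at the two kinds of internal nodes. At each \PQQ-node~$x$ with children $y_1,\dots,y_k$, an arc~\arc{u}{v} whose endpoints are stored in properly overlapping segments of~$x$ (neither contained in the other) picks out one of the two candidate orientations of~$x$, namely the one in which $u$'s segment precedes~$v$'s. I would collect such constraints and orient each \PQQ-node consistently, declaring failure on any conflict. At each \PQP-node~$x$, the ordering of children is constrained by arcs crossing the boundary of~$x$'s subtree: if~$v$ is stored at~$x$ itself, or in an ancestor \PQQ-node segment whose boundary is at~$x$, and if $u$ lies in the subtree of a child~$y_i$ of~$x$, then an arc between $u$ and~$v$ can be realised only when $y_i$ sits at the corresponding boundary (leftmost or rightmost) of~$x$'s children, with~$u$ likewise on the matching boundary within~$y_i$'s subtree. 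I would collect these constraints (treating children without any incident boundary arcs as free) and choose a child ordering that respects them, or declare failure.

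The main obstacle will be proving completeness of these constraints: that every directional representation of~$G$ produces a rotation satisfying them, and that, conversely, any rotation satisfying them agrees with some directional representation of~$G$. The forward direction follows from a case analysis of how stored vertices of~$T_0$ can relate in a representation (overlap versus containment) based on their \MPQ-tree spans. The backward direction---actually constructing interval endpoints---is deferred to the second phase of the algorithm via the two-dimensional poset check; so here I only need to argue that the rotation I produce leaves enough freedom for that second phase to succeed whenever~$G$ is a directional interval graph. For the running time, the Booth--Lueker construction of~$T_0$ is linear, while the constraint extraction walks each arc along its relevant ancestors in~$T_0$, totalling~$O(|V(G)|^2)$ in the worst case, in line with \cref{thm:recognition}.
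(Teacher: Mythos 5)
Your high-level plan matches the paper's: build an \MPQ-tree of $U(G)$ and rotate its nodes according to constraints read off the arcs of $G$, treating \PQQ-nodes and \PQP-nodes separately. Your \PQQ-node criterion (an arc between vertices in properly overlapping segments forces the orientation) is exactly the paper's; the paper additionally observes, using the \MPQ-tree axioms, that such a determining arc always \emph{exists} (by constructing one concrete pair $u \in S_1 \cap S_\ell$, $v \in (S_\ell \cap S_{\ell+1})\setminus S_1$), so there is no need to search over all arcs or to check consistency.

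There is, however, a genuine gap, and you have named it yourself: the completeness of your constraint set at \PQP-nodes. Your proposal collects boundary constraints, chooses ``a child ordering that respects them,'' and then says it ``only'' remains to argue that the resulting rotation leaves enough freedom for the second phase. That remaining argument is precisely the non-trivial content of the lemma, and it is not attempted. Two things are missing. First, a structural fact the paper establishes and that you do not: at most one child $y_i$ of a \PQP-node $x$ can have a vertex in $B_i = S_{xy_i}\cup B^T_{y_i}$ with an arc \emph{into} $A^T_x$, and at most one child can have an arc \emph{from} $A^T_x$; these force a unique first and a unique last child (when present), while all other children are provably unconstrained. Without this you cannot argue your constraint system is satisfiable, nor that the free choice among the remaining children is harmless. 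Second, the sufficiency direction: given an arbitrary directional rotation $\tilde T$ of $T$, one must show that swapping two unconstrained children of a \PQP-node keeps the rotation directional, so that the algorithm's output is reachable from $\tilde T$ by such swaps. The paper does this by taking a directional representation $\mathcal I$ agreeing with $\tilde T$, observing that the spans $I_k$, $I_l$ of the two subtrees' intervals are disjoint and lie strictly inside $\bigcap_{v\in A^T_x}\mathcal I(v)$, and then translating the two blocks of intervals to exchange their positions. Your ``forward direction follows from a case analysis'' and ``backward direction is deferred'' summary skips exactly this block-swap construction, which is where the work is.

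One smaller issue: your \PQP-node condition adds that $u$ must also sit ``on the matching boundary within $y_i$'s subtree.'' That is not a constraint to extract at $x$; it is automatically enforced when the algorithm later processes the nodes inside $y_i$'s subtree. Adding it here over-constrains the problem and obscures the clean per-node independence on which the paper's argument rests.
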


\begin{proof}
	Given a mixed graph $G$, if~$G$ is a directional interval graph,
	then clearly $U(G)$ is an interval graph and we can construct an
	\MPQ-tree~$T$ of~$U(G)$ in linear time using the algorithm by Korte
	and M{\"o}hring~\cite{KorteM89}.  We call a rotation of~$T$ 
	\emph{directional} if it agrees with some directional representation
	of~$G$.  As we assume~$G$ to be a directional interval graph, there
	is at least one directional rotation~$\tilde T$ of~$T$, and our goal
	is to find some directional rotation of~$T$.  Our algorithm decides
	the rotation of each node in~$T$ independently.
	
	\paragraph{Rotating \PQQ-nodes.}
	Let $y_1,\ldots,y_k$ be the children of a \PQQ-node~$x$ in~$T$.
	We are to decide whether to reverse the order of the children of~$x$.
	Let $S_i=S_{xy_i}$,
	let~$\ell = \max\set{i \colon S_1 \cap S_i \ne \emptyset}$, and let
	$u \in S_1 \cap S_\ell$.  We have $\ell < k$, and there is some vertex
	$v \in (S_\ell \cap S_{\ell+1}) \setminus S_1$.  This implies that
	$u$ and $v$ are assigned to overlapping segments of~$x$.
	Thus, the intervals representing~$u$ and~$v$ overlap in every interval
	representation of~$U(G)$.
	Hence, $u$ and~$v$ are connected by an arc
	in~$G$, and the direction of this arc determines the only possible
	rotation of~$x$ in any directional rotation of~$T$, \eg{},
	if $\arc{u}{v}$ is an arc in $G$ and the segment of~$u$ is to the
	right of the segment of~$v$, then reverse the order of the children
	of~$x$.

	\paragraph{Rotating \PQP-nodes.}
	Let $y_1,\ldots,y_k$ be the children of a \PQP-node~$x$ in~$T$.
	For each $i=1,\ldots,k$, let $B_i=S_{xy_i} \cup B^T_{y_i}$, and let
	$B=\bigcup_{i=1}^kB_i$.  The properties of the
	\MPQ-tree give us that (i)~every vertex in~$A^T_x$ is adjacent
	in~$U(G)$ to every vertex in~$B$, (ii)~none of
	the $B_i$ is empty, and (iii)~for any two vertices $b_i \in B_i$,
	$b_j \in B_j$ with $i\neq j$, we have that~$b_i$ and~$b_j$ are
	independent in~$G$.
	
	Assume that there is an arc \arc{b_i}{a} directed from some
	$b_i \in B_i$ to some $a \in A^T_x$.  We claim that any
	rotation~$T'$ of~$T$ that does not put~$y_i$ as the first child
	of~$x$ is not directional.  Assume the contrary.  Let~$y_j$,
	$j \neq i$ be the first child of~$x$ in~$T'$, let $\mathcal{I}$ be a
	directional representation that agrees with~$T'$, and let~$b_j$
	be some vertex in~$B_j$.
	The left endpoint of $\mathcal{I}(a)$ is to the
	right of the left endpoint of $\mathcal{I}(b_i)$ as \arc{b_i}{a} is
	an arc.  The right endpoint of $\mathcal{I}(b_j)$ is to the left of
	the left endpoint of $\mathcal{I}(b_i)$ as~$T'$ puts~$y_j$
	before~$y_i$.  Thus, $\mathcal{I}(b_j)$ and $\mathcal{I}(a)$ are
	disjoint, a~contradiction.
	
	Similarly, there are directed arcs from~$A^T_x$ to at
	most one set of type~$B_i$.  If there are any, the corresponding
	child $y_i$ is in the last
	position in every directional rotation of~$T$.  Our algorithm
	rotates the child~$y_i$ ($y_j$) with an arc from~$B_i$ to~$A^T_x$
	(from~$A^T_x$ to~$B_j$) to the first (last) position, should such
	children exist, and leaves the other children as they are in~$T$.
	It remains to show that the resulting rotation of~$T$ is directional; see~\cref{sec:rotation}.
\end{proof}

\begin{lemma}[Perturbing Endpoints]\label{lem:recognition_dimension}
	There is an algorithm that, given an \MPQ-tree~$T$ that agrees with
	some directional representation of a graph~$G$, constructs a
	directional representation $\mathcal{I}$ of~$G$ such that $T$ agrees
	with $\mathcal{I}$.
\end{lemma}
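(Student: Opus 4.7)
The plan is to reduce the perturbation step to the realization of a two-dimensional auxiliary poset. Given the \MPQ-tree~$T$, I first traverse it once to compute, for each vertex~$v$, the range~$[a_v, b_v]$ of indices of the maximal cliques of~$U(G)$ that contain~$v$; this takes $\Oh{|V(G)|^2}$ time. Together with the arc/edge/non-adjacency information of~$G$, these clique ranges determine, for each pair of distinct vertices $u, v$, either that the target representation is forced to satisfy a particular ordering of the endpoints or that some freedom remains (which only happens for edges~$\edge{u}{v}$ with $a_u = a_v$ and $b_u = b_v$).

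Next, I construct an auxiliary directed graph~$D$ on~$V(G)$ that encodes the target dominance relation among the 2D points~$(l_v, r_v)$: I put $\arc{u}{v}$ in~$D$ iff the representation must satisfy both $l_u < l_v$ and $r_u < r_v$. This is the case exactly when $\arc{u}{v} \in A(G)$, and when $u, v$ are non-adjacent in~$U(G)$ with $b_u < a_v$; the edges of~$G$ remain incomparable in~$D$, because they require containment, which corresponds to opposite orderings of left and right endpoints. I then invoke the McConnell--Spinrad algorithm~\cite{McConnellS99} on~$D$ to decide whether $D$ is two-dimensional, and, if so, to obtain a realizer~$(R_1, R_2)$.

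I assemble~$\mathcal{I}$ by placing clique-point anchors at integer positions $1, 2, \ldots, k$, where~$k$ is the number of maximal cliques of~$U(G)$. For each~$v$, I set~$l_v \in (a_v - \epsi, a_v)$ and~$r_v \in (b_v, b_v + \epsi)$ for some small $\epsi \in (0, 1/2)$, and within each same-$a$ group (respectively, same-$b$ group) I order the left (respectively, right) endpoints by the restriction of~$R_1$ (respectively, $R_2$) to that group. A sufficiently small~$\epsi$ ensures that intervals whose clique ranges are non-overlapping are disjoint, while all other pairs intersect.

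The main obstacle is correctness: that $D$ is two-dimensional whenever~$T$ agrees with some directional representation of~$G$, and that any realizer returned by McConnell--Spinrad yields, via the construction above, intervals whose induced directional intersection graph equals~$G$ and whose clique order matches~$T$'s frontier. The first direction is routine: in any valid representation agreeing with~$T$, the dominance poset of the 2D points~$(l_v, r_v)$ coincides with~$D$ and is 2D by construction. For the converse, the delicate case is that of \emph{half-forced} edges $\edge{u}{v} \in E(G)$ whose containment direction is fixed by a single strict inequality in~$a$ or~$b$: the pair is incomparable in~$D$, yet only one of the two within-group choices available to the realizer produces an edge (the other would produce an arc). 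I would argue that, for valid inputs, the forced arcs of~$D$ coming from~$A(G)$ and from the clique-range disjointness relations propagate transitively through the MPQ-tree structure to rule out any realizer that misorients such an edge, since any misorientation would create a comparability outside~$D$, contradicting the realizer axiom.
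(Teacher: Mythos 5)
Your overall strategy mirrors the paper's: build an auxiliary directed graph $D$ from the clique ranges $[\leftc(v),\rightc(v)]$ and the arcs/edges of~$G$, test two-dimensionality via McConnell--Spinrad, and read off a representation from a realizer. The critical difference is that your $D$ lives only on $V(G)$, whereas the paper's $D$ additionally contains two auxiliary chains $a_1 < \dots < a_{k+1}$ and $b_1 < \dots < b_{k+1}$ together with arcs $\arc{a_i}{v}$ for $i \le \leftc(v)$ and $\arc{b_i}{v}$ for $i \le \rightc(v)$. This is not a cosmetic difference: the chains are exactly what ties a realizer to the clique order, and without them your construction fails.

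Here is a concrete counterexample. Take $u=[0,10]$, $p=[1,2]$, $v=[4,8]$; the maximal cliques are $C_1=\{u,p\}$ and $C_2=\{u,v\}$, and $G$ has edges $\edge{u}{p}$, $\edge{u}{v}$ (both containments) and $p,v$ nonadjacent. The clique ranges are $[\leftc,\rightc]$ equal to $[1,2]$ for $u$, $[1,1]$ for $p$, $[2,2]$ for $v$. Your $D$ therefore has the single arc $\arc{p}{v}$ (from $\rightc(p)<\leftc(v)$); the two edges are incomparable. The pair of linear orders $R_1 = (p < v < u)$, $R_2 = (u < p < v)$ is a valid realizer of this $D$: the only pair below in both orders is $(p,v)$, and each incomparable pair is oppositely ordered. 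Yet your construction places, within $a$-group~$1=\{u,p\}$, $l_p < l_u$ (by $R_1$), and within $b$-group~$2=\{u,v\}$, $r_u < r_v$ (by $R_2$). This yields $l_p < l_u$ and $r_p < r_u$ (so $p$ and $u$ overlap, giving arc $\arc{p}{u}$) and $l_u < l_v$, $r_u < r_v$ (arc $\arc{u}{v}$), even though $G$ demands edges there. The half-forced pairs $\{u,p\}$ (same $a$, different $b$) and $\{u,v\}$ (same $b$, different $a$) have a forced containment direction that your $D$ does not encode, and this realizer misorients both.

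Your closing sentence claims any misorientation would create a comparability outside~$D$; that is simply false here -- the realizer above is comparability-faithful to your~$D$, and the problem is that $D$ itself is too weak. The paper's chains repair exactly this: the arcs $\arc{b_i}{\cdot}$ sandwich each vertex $v$ between $b_{\rightc(v)}$ and $b_{\rightc(v)+1}$ in one of the two orders, which forces $p<u$ in that order (since $\rightc(p)<\rightc(u)$) and hence $u<p$ in the other order, ruling out the bad realizer. If you keep your approach, you need to add such auxiliary vertices (or an equivalent device) to $D$; merely restricting $R_1,R_2$ to same-$a$/same-$b$ groups after the fact cannot recover the information the poset has already discarded.
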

\begin{proof}
	The frontier of~$T$ yields a fixed order of maximal cliques $C_1,\ldots,C_k$ of~$G$.
	Given this order, we construct the following auxiliary poset~$D$.
	First, we add two independent chains of length~$k+1$ each: vertices $a_1,\ldots,a_{k+1}$ with arcs \arc{a_i}{a_j} for $1\leq i<j\leq k+1$, and vertices $b_1,\ldots,b_{k+1}$ with arcs \arc{b_i}{b_j} for $1\leq i<j\leq k+1$.
	Then, for each vertex~$v$ in~$G$, let $\leftc(v)$ and $\rightc(v)$
	denote the indices of the leftmost and of the rightmost clique in
	which~$v$ is present, respectively.
	Now we add to~$D$ vertex~$v$ plus, for $1\leq i \leq \leftc(v)$, the
	arc \arc{a_i}{v} and, for $1\leq i \leq \rightc(v)$, the arc \arc{b_i}{v}.
	Further, for each arc \arc{u}{v} in~$G$, we add \arc{u}{v} to~$D$.
	Lastly, for any two vertices~$u$ and $v$ that are independent in~$G$
	and that fulfill $\rightc(u) < \leftc(v)$, we add an arc \arc{u}{v} to~$D$.
	We claim that~$G$ is a directional interval graph if and only if~$D$
	is a two-dimensional poset.
	
	First assume that~$G$ is a directional interval graph and fix a directional interval representation of~$G$ whose intervals all have distinct endpoints.
	For $i=1,\ldots,k$, let~$L_i$ be the sequence of all the vertices~$v$
	in~$G$ for which $\leftc(v)=i$, in the order of their left endpoints.
	Similarly, let~$R_i$ be the sequence of all the vertices~$v$ in~$G$
	for which $\rightc(v)=i$, in the order of their right endpoints.
	The following two linear orders $L$ and $R$ of the vertices of~$D$
	yield a realizer of~$D$:
	\begin{align*}
	L &= b_1 < b_2 < \ldots < b_k < a_1 < L_1 < a_2 < L_2 < \ldots < a_k < L_k < a_{k+1}\textrm{,} \\
	R &= a_1 < a_2 < \ldots < a_k < b_1 < R_1 < b_2 < L_2 < \ldots < b_k < R_k < b_{k+1}\textrm{.}
	\end{align*}
	
	Now, for the other direction, assume that we have a two-dimensional realizer of~$D$.
	As $b_{k+1}$ and~$a_1$ are independent in~$D$, we have that $b_{k+1} < a_1$ in exactly one of the orders in the realizer.
	We call this order~$L$, and the other one~$R$.
	As $a_{k+1}$ and~$b_1$ are independent in~$D$ and $b_1 < b_{k+1} < a_1 < a_{k+1}$ in~$L$, we have that $a_{k+1} < b_1$ in~$R$.
	For each $i=1,\ldots,k$, define~$L_i$ as the sequence of vertices in~$G$ appearing between $a_{i}$ and $a_{i+1}$ in the order~$L$.
	Similarly, let~$R_i$ be the sequence of vertices in~$G$ appearing between $b_{i}$ and $b_{i+1}$ in the order~$R$.
	Observe that, for every vertex~$v$, we have that $a_{\leftc(v)} < v$
	in~$D$ and that $a_{\leftc(v)+1}$ and~$v$ are independent in~$D$.
	As $a_{\leftc(v)+1} \leq a_{k+1} < b_1 \leq b_{\rightc(v)} < v$ in~$R$, we have $v < a_{\leftc(v)+1}$ in~$L$.
	Thus, $v$~is in $L_{\leftc(v)}$ and, by a similar argument, $v$~is in $R_{\rightc(v)}$.
	
	Now we are ready to construct a directional interval representation $\mathcal{I}$ of~$G$.
	For each $i=1,\ldots,k$, we select $\norm{L_i}$ different real points
	in $(i-\frac{1}{2},i)$ and $\norm{R_i}$ different real points in
	$(i,i+\frac{1}{2})$.
	For a vertex~$v$ that appears on the~$i$-th position in~$L_{\leftc(v)}$
	and on the~$j$-th position in $R_{\rightc(v)}$, we choose the~$i$-th
	point in $(\leftc(v)-\frac{1}{2},\leftc(v))$ as the
	left endpoint, and the~$j$-th point in
	$(\rightc(v),\rightc(v)+\frac{1}{2})$ as the right endpoint.
	Such a set of intervals is a directional interval representation of~$G$.
	First, observe that any two intervals intersect if and only if they have a common clique.
	Next, if there is an arc \arc{u}{v} in~$G$, then the arc $\arc{u}{v}$
	is also in~$D$, $u < v$ holds both in~$L$ and in~$R$, the
	corresponding intervals overlap, and $\mathcal{I}(u)$ starts and ends
	to the left of $\mathcal{I}(v)$.
	Last, if there is an edge \edge{u}{v} in~$G$, then $u$ and~$v$ are independent in~$D$, $u < v$ in one of the orders in the realizer, and $v < u$ in the other.
	Thus, one of the intervals $\mathcal{I}(u)$ and $\mathcal{I}(v)$
	must contain the other.
\end{proof}

\cref{thm:recognition} follows easily from \cref{lem:recognition_rotation,lem:recognition_dimension}.
See~\cref{sec:implementation} for details.

\section{Open Problems}
\label{sec:outro}

Can we recognize directional interval graphs in linear time?
Can we recognize bidirectional interval graphs in polynomial time?
Can we color bidirectional interval graphs optimally, or at least find
$\alpha$-approximate solutions with $\alpha < 2$?

\bibliographystyle{abbrvurl}
\bibliography{paper-long-proceedings}

\newpage
\appendix

\section{Speeding Up the Greedy Coloring Algorithm}
\label{sec:runtimegreedy}

\RuntimeGreedy*
\label{lem:runtimegreedy*}

\begin{proof}
  We describe a sweep-line algorithm sweeping from left to right.
  In a first step, we show how to achieve a running time of $O(m + n
  \log n)$, where $m$ is the number of edges of the directional
  interval graph~$G$ induced by the given set~$V$ of $n$ intervals.
  Then we use an additional data structure in order to avoid the
  $O(m)$ term in the running time.  Note that $m$ can be quadratic
  in~$n$.  For the faster implementation, we do not assume knowledge
  of~$G$. 

  Build a balanced binary search tree~\T to keep track of the
  currently available colors.  Initially, \T contains the colors~1
  to~$n$.  Fill a list~\List with the $2n$ endpoints of the intervals
  in~$V$ (which we can assume to be pairwise different).  Sort~\List.
  Then traverse~\List in this order, which corresponds to a left-to-right sweep.
  There are two types of events.
  \begin{description}
  \item[\normalfont\textsc{Left:}] If the current endpoint is the left
    endpoint of an interval~$v$, let $x$ be the largest color over all
    intervals that have an arc to~$v$, that is,
    $x = \max\{c(v) \colon \arc{u}{v} \in A(G)\} \cup \{0\}$.  Then
    search in \T for the smallest color~$y$ greater than~$x$,
    delete~$y$ from~\T, and set $c(v)=y$.
  \item[\normalfont\textsc{Right:}] If the current endpoint is the
    right endpoint of an interval~$v$, we insert~$c(v)$ into~\T
    because $c(v)$ is available again.
  \end{description}
  
  Clearly, this implementation runs in $O(m + n \log n)$ time.  To
  avoid the $O(m)$ term, we use a second binary search tree $\T'$ that
  maintains the currently active intervals, sorted according to color.
  We augment~$\T'$ by storing, in every node~$\nu$, the leftmost right
  endpoint~$r^\nu$ in its subtree.  Any interval that contains the
  current endpoint in~\List is \emph{active}.

  At a \textsc{Left} event, this allows us to determine, in
  $O(\log n)$ time, the interval~$u$ with the largest color~$x$ among
  all active intervals that overlap the new interval~$v$ (that is,
  $r_u < r_v$), as follows.  We find~$u$ by descending~$\T'$ from its
  root.  From the current node, we go to its right child~$\rho$
  whenever $r^\rho<r_v$.  If such an interval does not exist, we set
  $x=0$.  Then we continue as above, querying~\T for the smallest
  available color~$y>x$.  Finally, we set $c(v)=y$ and add~$v$ to~$\T'$.

  At a \textsc{Right} event, we update~\T as above.  Additionally, we
  need to update~$\T'$.  We do this by deleting the interval~$v$ that
  is about to end.
  
  We now argue that, for outputting the greedy solution of our
  coloring problem, the running time of $\Oh{n \log n}$ is worst-case
  optimal assuming the comparison-based model of computation.
  Suppose that a coloring algorithm would run in $o(n \log n)$ time.
  Then, we could use it to sort any set $\{a_1,\dots,a_n\}$ of $n$
  numbers in $o(n \log n)$ time by coloring the set
  $\{[a_1-M,a_1+M],\dots,[a_n-M,a_n+M]\}$ of intervals,
  where~$M = \max\{a_1,\dots,a_n\} - \min \{a_1,\dots,a_n\}$.  Namely,
  the corresponding directional interval graph is a tournament graph
  and for each $i \in \{1,\dots,n\}$, the color of the interval
  $[a_i-M,a_i+M]$ in an optimal coloring corresponds to the rank
  of~$a_i$ in a sorted version of $\{a_1,\dots,a_n\}$.
\end{proof}

\Approx*
\label{cor:approx*}
\begin{proof}
        Let $\mathcal{I}$ be the set of intervals of~$G$.
	We split $\mathcal{I}$ into a set of left-going intervals $\mathcal{I}_1$
	and into a set of right-going intervals $\mathcal{I}_2$.
	These sets induce the directional graphs $G_1$ and $G_2$, respectively.
	Now we color $G_1$ and $G_2$ independently with our greedy coloring algorithm
	and we re-combine them by using different sets of colors for $G_1$ and $G_2$.
	This is a proper coloring of~$G$ with $\chi = \chi(G_1) + \chi(G_2)$ colors
	since between any interval in~$\mathcal{I}_1$ and any interval
        in~$\mathcal{I}_2$, there may be an edge but no arc.
	Clearly, $\chi \le 2 \max \set{\chi(G_1), \chi(G_2)} \le 2 \chi(G)$.
\end{proof}

\section{Coloring Mixed Proper Interval Graphs}
\label{sec:proper}

\HardnessPropper*
\label{clm:hardnesspropper*}

\begin{proof}
	The general idea is as follows.
	We start the construction with the same set of intervals as in the proof of \cref{thm:hardness}.
	Then, we set $x_\mathsf{left} = 0$, and $x_\mathsf{right}$ to the very right of all intervals, i.e., $x_\mathsf{right} = 4 (m+1)$.
	Next, we describe a procedure that extends every interval so that it has the left endpoint in $x_\mathsf{left}$, or has the right endpoint in $x_\mathsf{right}$.
	The procedure adds some new intervals and merges some groups of intervals into one interval.
	The total height of the interval representation increases to $4n+2nm$ during the procedure.
	Finally, we extend every interval at $x_\mathsf{left}$ ($x_\mathsf{right}$)	to the left (right) by a length inverse to its current total length.
	This trick guarantees that in the end, no interval contains another interval.
	In the remainder of the proof, we describe the procedure of extending, adding and merging intervals.
	
	The intervals of the frame and all $v_i^\mathsf{true}$ and $v_i^\mathsf{false}$
	with $i = 1, \dots, n$ already end at $x_\mathsf{left}$ or $x_\mathsf{right}$.
	Currently, we have that in any drawing of $G_\Phi$ with $6n$ layers and a fixed $i \in \set{1,\ldots,n}$, 
	all the intervals $b_i^j$, and $o_i^j$ with $j = 1, \dots, m$ are drawn in the layers of $f_i^3$ and $f_i^6$.
	Additionally, each dummy interval and each $s_j$ is draw in one of these layers.
	We divide these layers into $m$ copies each so that each pair of $b_i^j$ and $o_i^j$ has its own two layers.
	
	First we divide each $f_i^3$ and $f_i^6$ into $m$ copies each.
	Accordingly, we adjust the height of the drawing to be $4n + 2nm$.
	Then, we make $m$ copies of each dummy interval and virtually assign each copy to a distinct layer of the drawing.
	For each $b_i^j$ we virtually assign it to the layer of the $j$-th copy of $f_i^3$ and extend it to the left up to $x_\mathsf{left}$.
	In this process, we merge $b_i^j$ with every dummy interval on the left and with the $j$-th copy of $f_i^3$ while keeping all involved arcs.
	We call the merged interval $f_i^{3,j}$.
	If there is no $b_i^j$, we obtain $f_i^{3,j}$ by extending
	the $j$-th copy of $f_i^3$ up to $x_\mathsf{right}$	and merging
	it with all dummy intervals virtually assigned to its layer.
	
	Symmetric to $b_i^j$, we extend each $o_i^j$ to the right up to $x_\mathsf{right}$ and merge $o_i^j$ with all dummy intervals virtually assigned to the layer of $f_i^{3,j}$, but here we drop the arcs of the dummy intervals.
	We call the merged interval~${o'}_i^j$.
	Similarly, for every clause $c_j$ with variables $v_i$, $v_k$, $v_\ell$, we merge all dummy intervals virtually assigned to the layer of the $j$-th copy of $f_z^6$, for $z=i,k,\ell$ that are to the right of $s_j$ and drop all the arcs as in the previous case.
	We obtain three copies~${d'}_j^1$, ${d'}_j^2$, and ${d'}_j^3$ of the same interval and we merge one of these copies, say ${d'}_j^3$, with $s_j$.
	We denote that new interval by ${s'}_j$.
	We drop all arcs of ${d'}_j^1$, ${d'}_j^2$, and ${s'}_j$
	to preserve the freedom we had for placing $s_j$ in our original construction.
	The only unmerged dummy intervals are in the layer of the $j$-th copy of $f_i^6$ to the left of~$s_j$
	or in the layer of the $j$-th copy of $f_i^6$ if there is
	no occurrence of the variable $v_i$	in the clause $c_j$.
	In each of these layers, we merge the dummy intervals together
	with the corresponding copy of $f_i^6$ and
	obtain intervals ending at $x_\mathsf{left}$.	
	For $j = 1, \dots, m$, we call the merged interval $f_i^{6,j}$.
	
	For $i = 1, \dots, n$ and $j = 1, \dots, m-1$,
	we add the arcs $\arc{f_i^{2}}{f_i^{3,1}}$,
	$\arc{f_i^{3,j}}{f_i^{3,j+1}}$, $\arc{f_i^{3,m}}{f_i^{4}}$,
	$\arc{f_i^{5}}{f_i^{6,1}}$, $\arc{f_i^{6,j}}{f_i^{6,j+1}}$,
	and $\arc{f_i^{6,m}}{f_{i+1}^{1}}$ to have a
	frame as in the original hardness construction.
	Observe that this new frame now has exactly $4n + 2nm$ intervals with
	their left endpoint at~$x_\mathsf{left}$ and, in the whole construction,
	there are $2n + 6m$ other intervals with their right endpoint at~$x_\mathsf{right}$, i.e.,
	the $2n$ intervals $v_i^\mathsf{true}$ and $v_i^\mathsf{false}$ for $i = 1, \dots, n$
	and the $6m$ intervals ${o'}_i^j$, ${d'}_j^1$, ${d'}_j^2$, and ${s'}_j$
	for $j = 1, \dots, m$.
	
	Next, we argue that the functionality described
	in the proof of \cref{thm:hardness} is retained.
	Intervals of the (new) frame either block
	a complete layer from $x_\mathsf{left}$ to $x_\mathsf{right}$
	or they end at position 1 (each $f_i^2$ and $f_i^4$)
	or within the clause gadget of a variable $c_j$
	if a variable $v_i$ occurs in~$c_j$ (each $f_i^{3,j}$ and $f_i^{6,j}$).
	Any other interval starting in a clause gadget of a clause~$c_j$
	needs to be matched with a frame interval that ends in the clause
	gadget of~$c_j$.
	Therefore, to have a construction with a total height of at most~$4n + 2nm$,
	we need to combine $f_i^{3,j}$ and $f_i^{6,j}$
	with ${o'}_i^j$ and some of $\set{{d'}_j^1, {d'}_j^2, {s'}_j}$,
	while $f_i^{3,j}$ and ${s'}_j$ are not combinable.
	This ensures that the correctness argument from the
	proof of \cref{thm:hardness} remains valid.
\end{proof}

\section{Rotation Is Directional}
\label{sec:rotation}
\begin{claim}
	The rotation of \MPQ-tree $T$ of a directional interval graph $G$ constructed in the proof of \cref{lem:recognition_rotation} is a directional rotation.
\end{claim}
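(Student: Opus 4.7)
The plan is to fix any directional interval representation $\mathcal{I}$ of $G$ together with its agreeing directional rotation $\tilde T$ of $T$ (which exist by the assumption on $G$), and then transform $\mathcal{I}$ step by step into a directional representation that agrees with the algorithm's output~$T'$, thereby witnessing that $T'$ is directional.

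First, I would show that at every Q-node and at the first and last positions of every P-node, $\tilde T$ and $T'$ must already agree. At a Q-node~$x$, the lemma's proof exhibits two vertices $u \in S_1 \cap S_\ell$ and $v \in (S_\ell \cap S_{\ell+1}) \setminus S_1$ that overlap in every directional representation, so the direction of the unique arc between them forces a unique orientation of~$x$. At a P-node~$x$, the lemma's proof shows that any child $y_i$ whose $B_i$ has an incoming arc to some vertex in $A^T_x$ must occupy position~$1$ in every directional rotation, and symmetrically for outgoing arcs at position~$k$; our algorithm places exactly these children in these positions. Only the relative order of the \emph{middle} children at P-nodes can differ between $\tilde T$ and~$T'$.

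The core of the argument will be the claim that the middle children of any P-node can be permuted freely while preserving directionality. Here I would invoke the defining property of the \MPQ-tree: any two vertices adjacent in $U(G)$ are stored in links lying on a common upward path. For two distinct children $y_p, y_q$ of a P-node~$x$, any $u$ in or below $y_p$ and $v$ in or below $y_q$ are therefore non-adjacent in $U(G)$, and hence unrelated in $G$, so their intervals in $\mathcal{I}$ are disjoint; moreover, both lie inside the span of~$x$, namely the intersection of all intervals of $A^T_x$. Thus the middle subtrees of $x$ occupy pairwise disjoint horizontal ``slots'' within this span, and permuting the slots only rigidly translates their intervals. Such translations preserve containment with every interval in $A^T_x$, disjointness with intervals in the other subtrees of~$x$, and all internal relations inside each slot, so no edge, arc, or containment in $G$ can be violated. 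Performing adjacent transpositions of slots at every P-node until the resulting order matches the one used by~$T'$ yields a directional representation agreeing with~$T'$.

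The main obstacle I expect is making the slot-swapping rigorous: one must verify that no arc direction is inadvertently reversed and that no two previously unrelated intervals accidentally become overlapping or nested after a translation. I would handle this by allocating, within $x$'s span, pairwise disjoint slots of sufficient width (with small gaps) for each child's subrepresentation, and by swapping only adjacent slots at a time, so that the transformation factors as a sequence of rigid motions of disjoint blocks whose interactions with the rest of the representation reduce to the edge/containment relations already guaranteed above.
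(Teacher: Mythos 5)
Your proposal is correct and takes essentially the same approach as the paper's own proof: start from an agreeing directional representation $\mathcal{I}$ of some directional rotation $\tilde T$, observe that $\tilde T$ and the algorithm's output $T'$ can only differ in the ordering of $\PQP$-node children whose subtrees have no arcs to or from $A_x^T$, show that the interval blocks corresponding to such children lie disjointly and strictly inside $\bigcap_{v\in A_x^T}\mathcal{I}(v)$, and then rigidly translate those blocks to swap their positions, which preserves every edge, arc, and non-adjacency. The only cosmetic difference is that you realize the permutation by a sequence of adjacent transpositions with padded slots, whereas the paper swaps two arbitrary children at once after stretching the line so the two bounding intervals have equal length; both handle the same technical point.
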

Let~$T'$ denote the tree~$T$ after applying the rotations described in the proof of \cref{lem:recognition_rotation}.
We claim that~$T'$
is directional.  Let~$\tilde T$ be an arbitrary directional rotation
of~$T$.  By construction, $T'$ and~$\tilde T$ differ only in the
ordering of children of \PQP-nodes~$x$ that do not have arcs from/to
vertices in~$A_x^T$.  To prove that~$T'$ is directional, it suffices
to show that the rotation of~$\tilde T$ obtained by swapping two
children of a P-node~$x$ that have no arcs from/to vertices in
$A_x^T$ is directional.

Consider a directional interval representation~$\mathcal I$ whose
clique ordering corresponds to the rotation~$\tilde T$ and
let~$y_k,y_l$ be two children of some \PQP-node~$x$ such that
neither~$B_k$ nor~$B_l$ contains a vertex with an arc from/to a
vertex in~$A_x^T$.  Let~$I_k$ be the smallest interval that
contains~$\mathcal I(v)$ for all~$v \in B_k$ and let~$I_l$ be
defined analogously for~$B_l$.  Note that~$I_k$ and~$I_l$ are
disjoint and that each of them is properly contained
in~$\bigcap_{v \in A_x^T} \mathcal I(v)$ as otherwise it would have
incoming or outgoing arcs.  After suitably stretching the real line,
we may assume that~$I_k$ and~$I_l$ have the same length.
Let~$x_k,x_l$ denote the left endpoints of~$I_k$ and~$I_l$,
respectively.  We obtain a directional representation whose clique
ordering corresponds to~$T'$ simply by exchanging the positions of
the representations of the subgraphs induced by~$B_l$ and by~$B_k$.
More formally, for each~$v \in B_k$
set~$\mathcal I'(v) = \mathcal I(v) - x_k+x_l$ and for
each~$v \in B_l$ set~$\mathcal I'(v) = \mathcal I(v) - x_l + x_k$.
For all other vertices $v \in V \setminus (B_k \cup B_l)$
set~$\mathcal I'(v) = \mathcal I(v)$.  It follows that each of the
subgraphs induced by~$B_k$ and~$B_l$ is still represented correctly.
Moreover, by construction the vertices in~$B_l$ and~$B_k$ still have
edges (and not arcs) to all vertices in~$A_x^T$.

\section{Recognition Algorithm}
\label{sec:implementation}
\ThmRecognition*
\label{thm:recognition*}
\begin{proof}
	Our algorithm, given a directional interval graph~$G$, applies the
	algorithm from Lemma~\ref{lem:recognition_rotation} to obtain a
	directional \MPQ-tree of~$G$.
	Then, using Lemma~\ref{lem:recognition_dimension}, it constructs a directional representation of~$G$.
	If any of the phases fails, then we know that~$G$ is not a directional interval graph, and we can reject the input.
	Otherwise, our algorithm accepts the input and produces the directional representation of~$G$.
	It is easy to see that both algorithms from
	Lemmas~\ref{lem:recognition_rotation}
	and~\ref{lem:recognition_dimension} can be implemented to run in
	$\Oh{|V(G)|^2}$ time.
	For Lemma~\ref{lem:recognition_rotation} it is enough to notice that:
	\begin{itemize}
		\item the \MPQ-tree of an interval graph $U(G)$ is of size $\Oh{|V(G)|}$ and can be constructed in time $\Oh{|V(G)|+|E(G)+A(G)|}$~\cite{KorteM89},
		\item when deciding the rotation of a \PQQ-node $x$, the pair of vertices that decide the rotation of $x$ can be found in $\Oh{|V(G)|}$ time,
		\item when deciding the rotation of a \PQP-node $x$, the first, and the last child of $x$ can be found in $\Oh{|V(G)|}$ time.
	\end{itemize}
	For Lemma~\ref{lem:recognition_dimension} it is enough to notice that:
	\begin{itemize}
		\item there are $\Oh{|V(G)|}$ maximal cliques in an interval graph,
		\item there are $\Oh{|V(G)|}$ vertices and $\Oh{|V(G)|^2}$ arcs in the auxiliary poset~$D$,
		\item two-dimensional realizer of the auxiliary poset~$D$ can be constructed in time $\Oh{|V(D)+A(D)|}$~\cite{McConnellS99}.
	\end{itemize}
\end{proof}

It is also quite easy to speed up the implementation of the rotation
algorithm in Lemma~\ref{lem:recognition_rotation} to linear time. 
Sadly, the auxiliary poset~$D$ that we construct in
Lemma~\ref{lem:recognition_dimension} has quadratic size and is thus the
main obstacle for obtaining a linear-time algorithm.
We suspect that an explicit construction of~$D$ can be avoided.

\end{document}